\newcommand{\R}{\mathbb{R}}
\newcommand{\E}{\mathbb{E}}
\newcommand{\sk}{\mathsf{sk}}
\newcommand{\eps}{\epsilon}
\newtheorem{definition}{Definition}[section]
\newtheorem{lemma}[definition]{Lemma}
\newtheorem{theorem}[definition]{Theorem}
\theoremstyle{definition}
\newtheorem{remark}{Remark}
\newcommand{\plog}{\mathrm{polylog}}
\title{Optimal Lower Bounds for Sketching Graph Cuts}
\author{Charles Carlson\footnote{{\sf ccarlsn2@illinois.edu.} U.I.U.C.}\and Alexandra Kolla \footnote{{\sf akolla@illinois.edu.} U.I.U.C. Supported by NSF Grant 1423452. }\and Nikhil Srivastava\footnote{{\sf nikhil@math.berkeley.edu.} {U.C.
Berkeley}.  Supported by NSF grant CCF-1553751 and a Sloan research fellowship.} \and Luca Trevisan\footnote{{\sf luca@berkeley.edu.} U.C. Berkeley. Supported by the National Science Foundation under Grants No. 1540685 and No. 1655215.}}
\begin{document}

\maketitle
\begin{abstract} We study the space complexity of sketching cuts and Laplacian quadratic forms of graphs. 
	We show that any data structure which approximately stores the sizes of all cuts in an undirected
	graph on $n$ vertices up to a $1+\epsilon$ error must use $\Omega(n\log n/\epsilon^2)$ bits of space in the worst case,
	improving the $\Omega(n/\epsilon^2)$ bound of \cite{andoni} and matching the best known upper bound achieved by spectral sparsifiers \cite{bss}.
	Our proof is based on a rigidity phenomenon for cut (and spectral) approximation which may be of independent interest: 
	any two $d-$regular graphs which approximate each other's cuts significantly better than a random graph approximates
	the complete graph must overlap in a constant fraction of their edges. 
\end{abstract}
\section{Introduction}
An $\epsilon-$spectral sparsifier \cite{st04} of a weighted graph $G$ is a
weighted graph $H$ such that for every $x\in\R^n$  we have the multiplicative
guarantee: \begin{equation}\label{eqn:specapprox} (1-\epsilon) x^TL_Gx \le
x^TL_Hx \le (1+\epsilon)x^TL_Gx.\end{equation} An $\epsilon-$cut sparsifier
\cite{bk96} is a graph such that \eqref{eqn:specapprox} holds for all
$x\in\{-1,1\}^n$; the former condition clearly implies the latter, and it is
not hard to see that the converse is not true.  

There have been many papers on efficiently constructing sparsifiers with few
edges (e.g. \cite{ss,fhhp,zhu,lee2017sdp}). The best known
sparsity-approximation tradeoff is achieved in \cite{bss}, who showed that
every graph on $n$ vertices has an $\epsilon$-spectral sparsifier with
$O(n/\epsilon^2)$ edges (and this is also the best known upper bound for
$\epsilon$-cut sparsification). Since every weighted edge can be stored using
$O(\log n)$ bits\footnote{After discretizing the weights to, say, $1/n^4$
precision, which introduces negligible error.}, storing such a sparsifier
requires $O(n\log n/\epsilon^2)$ bits.

The recent work \cite{andoni} studied the question of whether it is possible to
use substantially fewer bits if one simply wants a data structure (not
necessarily a graph) which answers cut and quadratic form queries.  We will
use the following definition, which is inspired by the one in \cite{sidford}.

\begin{definition} An {\em $\epsilon-$spectral sketch} of a graph $G$ is 
	a function $f:\R^n\rightarrow\R$  such that
	for every $x\in\R^n$:
	\begin{equation}\label{eqn:sketchapprox}(1-\epsilon)x^TL_Gx \le f(x)\le (1+\epsilon)x^TL_Gx.\end{equation}
		An {\em $\epsilon-$cut sketch} is a function $f:\{-1,1\}^n\rightarrow\R$ such that \eqref{eqn:sketchapprox} holds for all $x\in\{-1,1\}^n$.

	A {\em sketching scheme} is a deterministic map $\sk$ from graphs on
	$n$ vertices to ($\eps$-spectral or $\eps$-cut) sketches
	$f:\R^n\rightarrow\R$, along with specified procedures for storing the
	functions as bit strings and for evaluating any given $f$ on a query
	$x$ \footnote{We will not be concerned with the details of these procedures since we are only interested in the space used by the sketches and
	not computational parameters such as query time / success probability
	of randomized schemes.}.
	The number of bits required to store $f$ is called its {\em size}.
\end{definition}
In \cite{andoni} it was shown that any $\eps$-cut sketch must use
$\Omega(n/\epsilon^2)$ bits in the worst case, leaving a logarithmic gap
between the best known upper and lower bounds. In this paper, we close this gap by 
showing that any $\epsilon$-cut sketching scheme must in fact use $\Omega(n\log n/\epsilon^2)$ bits whenever $\epsilon=\omega(n^{-1/4})$ (Theorem \ref{thm:cutlb}),
which means that it is not in general possible to obtain any asymptotic savings by considering sketches
which are not graphs. We also give a lowerbound for $\epsilon$-spectral sketching with a simpler proof and 
slightly better constants (Theorem \ref{thm:speclb}).

\subsection{Related Work}
The paper \cite{andoni} also studied the problem of producing 
a ``for each'' sketching algorithm, which has the property that for any particular query $x$,
$f(x)$ approximates $x^TLx$ with constant probability. They showed that in this weaker model 
it is possible to obtain $\eps-$cut sketches of size $O(n\plog (n) /\epsilon)$ and $\eps$-spectral sketches of size $O(n\plog (n)/\epsilon^{1.6})$.
The latter bound was recently improved to $O(n\plog (n)/\epsilon)$ and generalized to include queries to the pseudoinverse
in \cite{sidford}. 

In contrast, in this paper we study only the ``for all'' model, in which the sketch is required to work for all queries simultaneously.

\subsection{Techniques} Our proof is based on the following rigidity
phenomenon: there is a constant $c$ such that if two $d-$regular graphs
$c/\sqrt{d}$-cut approximate each other then they must overlap in a constant
fraction of edges (Lemmas \ref{lem:specrigid} and \ref{lem:cutrigid}). This
allows us to argue that below this approximation threshold any sketch encodes a
constant fraction of the information in the graph, and so any sketch must use
$\Omega(dn\log n)=\Omega(n\log n/\epsilon^2)$ bits.

Interestingly, this phenomenon is very sensitive to the value of the constant
$c$ --- in particular, a well-known theorem of Friedman \cite{friedman} says
that the eigenvalues of the Laplacian of a random $d-$regular graph are
contained in the range $[d-2\sqrt{d-1}-o(1),d+2\sqrt{d}+o(1)]$ with high
probability, which implies that almost every $d-$regular graph has a
$3/\sqrt{d}$-spectral sketch of constant size, namely the complete graph. 

Note that according to the Alon-Boppana bound \cite{alon} it is not possible to approximate
$K_n$ by a $d-$regular graph with error less than $2/\sqrt{d}$.  Thus, our
result can be interpreted as saying that a $d-$regular graph can only be
approximated by nearby graphs when the error is substantially below the Alon-Boppana bound.

We remark that the proof of \cite{andoni} was based on showing a $\Omega(1/\epsilon^2)$ bit lower bound
for a constant sized graph by reducing to the Gap-Hamming problem in communication complexity, and then
concatenating $O(n)$ instances of this problem to obtain an instance of size $n$. This method is thereby
inherently incapable of recovering the logarithmic factor that we obtain (in contrast, we work with random regular graphs).

\subsection{Notation and Organization} We will denote $\epsilon$-spectral approximation as $$(1-\epsilon)L_G\preceq L_H\preceq (1+\epsilon)L_G$$ and
$\epsilon-$cut approximation as $$(1-\epsilon)L_G\preceq_\square L_H\preceq_\square (1+\epsilon)L_G.$$ We will use $\lg$ and $\ln$ to
denote binary and natural logarithms.

We prove our lower bound for spectral sketching in Section 2, and the lower bound for cut sketching in Section 3.
Although the latter is logically stronger than the former, we have chosen to present the spectral case
first because it is extremely simple and introduces the conceptual ideas of the proof.

\subsection*{Acknowledgments} We would like to thank MSRI and the Simons Institute for the Theory of Computing, where this work was carried out
during the ``Bridging Discrete and Continuous Optimization'' program.

\section{Lower Bound for Spectral Sketches}
In this section, we prove the following theorem. 
\begin{theorem}\label{thm:speclb} For any $\epsilon=\omega(n^{-1/4})$, any $\epsilon$-spectral sketching scheme $\sk$ for graphs with $n$ vertices
must use at least $\frac{n\lg n}{500\eps^2}\cdot (1-o_n(1))$ bits in the worst case.\end{theorem}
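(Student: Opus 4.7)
The plan is an encoding argument on the family $\mathcal{F}_d$ of simple $d$-regular graphs on $n$ vertices, where $d$ is chosen proportional to $1/\eps^2$. Standard asymptotics give $\lg |\mathcal{F}_d| = (1-o(1)) \tfrac{dn}{2} \lg(n/d)$, and the hypothesis $\eps = \omega(n^{-1/4})$ forces $d = o(\sqrt{n})$, so $\lg(n/d) \geq (\tfrac{1}{2} - o(1))\lg n$ and $\lg |\mathcal{F}_d| = \Theta(n\lg n/\eps^2)$. The target is to show that the preimage of each sketch contains at most an $\exp(-\Omega(n\lg n/\eps^2))$ fraction of $\mathcal{F}_d$, from which the bit bound follows.

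The key ingredient is the spectral rigidity lemma (Lemma \ref{lem:specrigid}), which I would prove as follows. Suppose $G, H \in \mathcal{F}_d$ satisfy $(1-\delta) L_G \preceq L_H \preceq (1+\delta) L_G$ with $\delta = c/\sqrt{d}$. Because both graphs are $d$-regular and simple, the diagonals of $L_G$ and $L_H$ coincide and each off-diagonal entry of $L_G - L_H$ lies in $\{-1,0,1\}$, so
\[
\|L_G - L_H\|_F^2 \;=\; 2|E(G) \triangle E(H)|.
\]
On the other hand the hypothesis gives $\|L_G - L_H\|_{\mathrm{op}} \leq \delta \|L_G\|_{\mathrm{op}} \leq 2d\delta$, and combining with the general inequality $\|M\|_F^2 \leq n \|M\|_{\mathrm{op}}^2$ yields $|E(G) \triangle E(H)| \leq 2 n c^2 d$. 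Since $|E(G)| = |E(H)| = dn/2$, this gives $|E(G) \cap E(H)| \geq (1 - 2c^2) \tfrac{dn}{2}$.

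To finish, if a sketch $f$ is a valid $\eps$-spectral sketch for both $G, G' \in \mathcal{F}_d$, combining the two sketching inequalities shows $L_G$ and $L_{G'}$ spectrally approximate one another with distortion at most $2\eps/(1-\eps)$. Choosing $d := \lfloor c^2(1-\eps)^2 / (4\eps^2) \rfloor$ for a small absolute constant $c$ ensures this distortion is at most $c/\sqrt{d}$, so the rigidity lemma confines the preimage of $f$ to a ``Hamming ball'' of $d$-regular graphs sharing an $\alpha := 1-2c^2$ fraction of their edges with any fixed member. The size of this ball can be upper-bounded by specifying the $\alpha dn/2$ shared edges (a subset of the center's $dn/2$ edges) and the $(1-\alpha)dn/2$ new edges (a subset of $\binom{n}{2}$):
\[
\binom{dn/2}{\alpha dn/2} \binom{\binom{n}{2}}{(1-\alpha)dn/2} \;\leq\; \exp\!\Bigl((1-\alpha + o(1)) \tfrac{dn}{2} \ln(n/d)\Bigr),
\]
where the $o(1)$ absorbs bounded entropy terms since $\ln(n/d) \to \infty$. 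Consequently the number of distinct sketches is at least $\exp((\alpha - o(1)) \tfrac{dn}{2} \ln(n/d))$; picking $c$ to maximize $c^2(1-2c^2)$ yields the claimed $\tfrac{1}{500}$ after careful bookkeeping.

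The main obstacle is proving the rigidity lemma at precisely the $c/\sqrt{d}$ scale: as the introduction warns, Friedman's theorem makes rigidity fail for $c \gtrsim 3$, since a random $d$-regular graph approximates $K_n$ to error $\approx 2/\sqrt{d}$ while overlapping it minimally. Fortunately the Frobenius-versus-operator norm inequality gains exactly the factor of $\sqrt{n}$ needed to bring the squared operator norm bound $4d^2\delta^2$ down to a squared Frobenius bound $4nc^2 d$ that is linear in $d$, matching the total ``budget'' $O(dn)$ of possible edge differences and letting the argument go through at this critical scale with a small absolute $c$.
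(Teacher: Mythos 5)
Your proposal is correct and takes essentially the same route as the paper: prove the rigidity lemma via the operator-norm/Frobenius-norm inequality, deduce that any two graphs sharing a sketch must overlap in all but $O(\eps^2 d^2 n)$ edges, bound the preimage of each sketch by encoding a graph as a subset of a fixed center's edges plus the few new edges, and compare against the $\exp(\Theta(dn\log(n/d)))$ count of $d$-regular graphs with $d\asymp 1/\eps^2$. One small bookkeeping slip: the quantity you want to maximize over $c$ is $c^2(\tfrac14-2c^2)$ (the $\tfrac14$ reflecting that $\lg(n/d)\ge(\tfrac12-o(1))\lg n$, combined with the extra $\tfrac12$ in the exponent of the regular-graph count), not $c^2(1-2c^2)$; the former is maximized at $c=\tfrac14$ and gives a denominator of $512$ rather than exactly $500$, while the paper's stated $500$ comes from deliberately non-optimized rounded constants ($d=\lceil 1/(25\eps^2)\rceil$). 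This does not affect correctness up to constants.
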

The main ingredient is the following lemma.
\begin{lemma}[Rigidity of Spectral Approximation]\label{lem:specrigid}
Suppose $G$ and $H$ are simple $d-$regular graphs such that
	$$ (1-\eps)L_G\preceq L_H\preceq (1+\eps)L_G.$$
	Then $G$ and $H$ must have at least $\frac{dn}{2}(1-\epsilon^2d/2)$ edges in common.
	\end{lemma}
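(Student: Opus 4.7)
The plan is to compute $\mathrm{tr}((L_H - L_G)^2)$ in two different ways: combinatorially in terms of the edge overlap, and spectrally via the approximation hypothesis, then compare.

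First, I would observe that since both $G$ and $H$ are $d$-regular, their Laplacian diagonals coincide ($dI$), so $B := L_H - L_G = A_G - A_H$ is a symmetric $\{-1,0,+1\}$ matrix whose off-diagonal entries are supported precisely on $E(G)\triangle E(H)$. Counting squared entries yields the combinatorial identity
\[
\mathrm{tr}(B^2) \;=\; 2\,|E(G) \triangle E(H)| \;=\; 2(dn - 2k),
\]
where $k = |E(G)\cap E(H)|$ is the quantity to be bounded below.

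Second, I would upper bound $\mathrm{tr}(B^2)$ using the spectral hypothesis $|x^T B x| \leq \epsilon \cdot x^T L_G x$. The key trick is to expand the trace in an orthonormal eigenbasis $\{x_i\}$ of $B$ (so that $x_i^T B x_i = \lambda_i(B)$) and then apply the hypothesis termwise:
\[
\mathrm{tr}(B^2) \;=\; \sum_i (x_i^T B x_i)^2 \;\leq\; \epsilon^2 \sum_i (x_i^T L_G x_i)^2 \;\leq\; \epsilon^2 \sum_i x_i^T L_G^2 x_i \;=\; \epsilon^2 \mathrm{tr}(L_G^2),
\]
where the middle inequality is Cauchy--Schwarz $\langle x, L_G x\rangle^2 \leq \|x\|^2 \langle x, L_G^2 x\rangle$ and the final equality is basis-independence of the trace. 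A direct count gives $\mathrm{tr}(L_G^2) = nd^2 + 2|E(G)| = nd(d+1)$, which is essentially $nd^2$ in the regime of interest.

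Combining the two, $2(dn-2k) \leq \epsilon^2 nd(d+1)$, which rearranges to $k \geq (dn/2)\bigl(1 - \epsilon^2 (d+1)/2\bigr)$, matching the stated bound up to the negligible difference between $d+1$ and $d$ in the lower-order term. I do not anticipate a substantive obstacle; the only subtle point is that Cauchy--Schwarz must be applied in the eigenbasis of $B$ (rather than of $L_G$), so that the left-hand side of the display genuinely equals $\sum_i \lambda_i(B)^2 = \mathrm{tr}(B^2)$ and the spectral hypothesis can be applied one eigendirection at a time.
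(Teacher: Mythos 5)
Your proof is correct and takes a genuinely different route from the paper's. Both arguments reduce to bounding $\mathrm{tr}(B^2) = \|A_G - A_H\|_F^2 = 2|E(G)\,\Delta\, E(H)|$ where $B = L_H - L_G = A_G - A_H$, but the paper then passes to the operator norm: it derives $\|B\| \le 2\epsilon d$ from $\|L_G\| \le 2d$ and applies the crude $\|B\|_F^2 \le n\|B\|^2 \le 4\epsilon^2 d^2 n$. You instead bound the Frobenius norm directly by expanding $\mathrm{tr}(B^2)$ in the eigenbasis of $B$, applying the quadratic-form hypothesis eigenvector by eigenvector, and then using Cauchy--Schwarz to pass from $(x^T L_G x)^2$ to $x^T L_G^2 x$, giving $\mathrm{tr}(B^2) \le \epsilon^2 \mathrm{tr}(L_G^2) = \epsilon^2 n d(d+1)$. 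This is roughly a factor of $4$ sharper. In fact, if you trace the paper's displayed inequalities through the final ``rearranging'' you only obtain $k \ge \frac{dn}{2}(1 - 2\epsilon^2 d)$, not the stated $\frac{dn}{2}(1 - \epsilon^2 d/2)$; your bound $k \ge \frac{dn}{2}\bigl(1 - \epsilon^2(d+1)/2\bigr)$ essentially matches the statement, differing only by the negligible $d \mapsto d+1$. The key idea --- applying the L\"owner hypothesis in the eigenbasis of $B$ rather than of $L_G$, so that $\mathrm{tr}(B^2)$ appears exactly --- amounts to the clean general fact that $-C \preceq B \preceq C$ with $C \succeq 0$ implies $\mathrm{tr}(B^2) \le \mathrm{tr}(C^2)$. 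Either constant suffices for the downstream Theorem \ref{thm:speclb} after adjusting the choice of $d$, so the discrepancy is cosmetic, but your argument is the one that actually delivers the lemma as written.
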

\begin{proof}
Since $G$ and $H$ are $d-$regular, we have $L_H-L_G =
	(dI-A_H)-(dI-A_G)=A_G-A_H$ and $\|L_G\|\le 2d$. Thus the hypothesis
	implies that:		\[ -2d\epsilon I\preceq -\epsilon\cdot
	L_G\preceq L_H-L_G = A_G-A_H\preceq \epsilon\cdot L_G\preceq 2d\epsilon
	I,
\]
which means $\|A_G-A_H\|\le 2\epsilon d$. Passing to the Frobenius norm, we find that
	$$ \sum_{ij}(A_G(i,j)-A_H(i,j))^2=\|A_G-A_H\|_F^2\le n\|A_G-A_H\|^2\le 4\epsilon^2d^2 n.$$
	The matrix $A_G-A_H$ has entries in $\{-1,0,1\}$, with exactly
	two nonzero entries for every edge in $E(G)\Delta E(H)$, so the left hand side is equal to 
	$$2|E(G)\Delta E(H)| = 2(|E(G)|+|E(H)|-2|E(G)\cap E(H)|) = 2(dn-2|E(G)\cap E(H)|),$$
	since $|E(G)|=|E(H)|=dn/2$. Rearranging yields the desired claim.
\end{proof}
Note that the above lemma is vacuous for $\epsilon \ge\sqrt{2/d}$ but indicates
that $G$ and $H$ must share a fraction of their edges for $\epsilon$ below this
threshold.
\begin{lemma}\label{lem:sketchcount} For any function $f:\R^n\rightarrow \R$ and $\epsilon<1/2$:
	$$\lg |\{G\in G_{n,d}:\textrm{$G$ is $\epsilon-$spectrally approximated by $f$}\}|\le dn/2+5\epsilon^2d^2n\lg n,$$
	where $G_{n,d}$ denotes the set of all $d-$regular graphs on $n$ vertices.
\end{lemma}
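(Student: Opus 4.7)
The plan is to fix a reference graph $G_0$ that is $\eps$-spectrally approximated by $f$ and argue, via the rigidity lemma, that every other graph in the set lies in a small combinatorial neighborhood of $G_0$, which I can then count directly.

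If no graph is approximated by $f$ the bound is vacuous, so fix any $G_0$ in the set. For any other $G$ in the set, chaining the two defining inequalities for $G$ and $G_0$ gives
\[
\frac{1-\eps}{1+\eps}\,L_{G_0} \;\preceq\; L_G \;\preceq\; \frac{1+\eps}{1-\eps}\,L_{G_0},
\]
which is a $(1\pm\eps')$-spectral approximation of $G_0$ by $G$ with $\eps' = 2\eps/(1-\eps) < 4\eps$ for $\eps < 1/2$. Applying Lemma~\ref{lem:specrigid} with this $\eps'$ yields $|E(G)\cap E(G_0)| \ge \tfrac{dn}{2}(1 - 8\eps^2 d)$, so (using $d$-regularity) $G$ is obtained from $G_0$ by deleting some $r \le R := 4\eps^2 d^2 n$ edges and adding the same number of non-edges.

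Next, I would count: each $G$ in the set is determined by a pair consisting of the deleted subset of $E(G_0)$ and the added subset of non-edges, giving
\[
|\{G \in G_{n,d} : G\text{ approximated by }f\}| \;\le\; \sum_{r=0}^{R}\binom{dn/2}{r}\binom{\binom{n}{2}}{r} \;\le\; 2^{dn/2} \cdot (R+1) \cdot \binom{\binom{n}{2}}{R},
\]
where the trivial bound $\sum_r \binom{dn/2}{r} \le 2^{dn/2}$ is applied to the deletion factor (which is the source of the additive $dn/2$) and monotonicity of $\binom{N}{\cdot}$ is applied to the addition factor. Taking $\lg$ and using the standard estimate $\binom{\binom{n}{2}}{R} \le (en^2/(2R))^R$ with $R = 4\eps^2 d^2 n$ produces the desired $dn/2 + 5\eps^2 d^2 n \lg n$ after absorbing the subdominant $\lg R$, $\lg e$, and $-\lg(\eps^2 d^2)$ contributions into the constant.

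Conceptually the argument is direct; the main delicate point is the constant-accounting in the last step. Reproducing the precise form $dn/2 + 5\eps^2 d^2 n \lg n$ requires the asymmetric split of the binomial product used above: one factor bounded trivially by $2^{dn/2}$ and the other bounded tightly by $(en^2/R)^R$. A uniformly tight bound on both factors would instead give a purely multiplicative $C\eps^2 d^2 n \lg n$ term with a somewhat larger constant, which is not quite in the form that will be convenient when the lemma is combined with the entropy bound on the number of $d$-regular graphs in the proof of Theorem~\ref{thm:speclb}.
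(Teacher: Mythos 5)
Your overall strategy is the same as the paper's: chain the two sketch inequalities to get that any two graphs sharing the sketch $\eps'$-approximate each other, invoke Lemma~\ref{lem:specrigid} to conclude they differ in few edges, and then encode the set of approximable graphs by the symmetric difference with one fixed reference graph. The one genuine difference is how the symmetric difference is encoded: the paper lists the at most $r$ edges of $E(G)\setminus E(H)$ one at a time, charging a flat $2\lg n$ bits per edge, which immediately yields $dn/2 + 2r\lg n$ with $r=\tfrac{9}{4}\eps^2 d^2 n$ and hence the constant~$5$. You instead count subsets via binomial coefficients.

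The place where your version does not quite close is the final constant-accounting. With $\eps'<4\eps$ (which is the honest bound for $\eps<1/2$; the paper's use of $3\eps$ actually needs $\eps\le 1/3$) you get $R=4\eps^2 d^2 n$ rather than $\tfrac{9}{4}\eps^2 d^2 n$, and then
\[
\lg\binom{\binom{n}{2}}{R} \le R\lg\!\left(\frac{en^2}{2R}\right) = R\lg n + R\lg\!\left(\frac{e}{8\eps^2 d^2}\right).
\]
The second term is \emph{not} subdominant in general: it is nonnegative whenever $\eps^2 d^2\le e/8$, and your target $5\eps^2 d^2 n\lg n = \tfrac{5}{4}R\lg n$ requires it to be at most $\tfrac14 R\lg n$, i.e.\ $\eps^2 d^2 \ge e/(8n^{1/4})$. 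The lemma as stated imposes only $\eps<1/2$, so the bound can fail (e.g.\ $d=1$, $\eps=n^{-1/2}$ gives roughly $8\eps^2 d^2 n\lg n$, not $5$). In the regime actually used in Theorem~\ref{thm:speclb} ($d=\lceil 1/(25\eps^2)\rceil$, so $\eps^2 d^2\ge d/25$) your estimate is fine and even slightly sharper, but to prove the lemma at the stated generality you should either add a hypothesis such as $\eps^2 d^2 = \Omega(n^{-1/4})$, relax the constant to $8$, or switch to the paper's per-edge encoding, which sidesteps the dependence on the ratio $R/\binom{n}{2}$ altogether.
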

\begin{proof} If $f$ does not $\epsilon-$approximate any $d-$regular graph then we are done. Otherwise, let $H$ be the 
	lexicographically (in some pre-determined ordering on $d-$regular graphs) first graph which $f$ $\epsilon$-approximates.
	Suppose $G$ is another graph that $f$ $\epsilon-$approximates. Notice that by applying \eqref{eqn:sketchapprox} twice, we have
	that for every $x\in\R^n$:
	$$ \frac{1-\epsilon}{1+\epsilon}\le\frac{x^TL_Hx}{x^TL_Gx} = \frac{f(x)}{x^TL_Gx}\cdot \frac{x^TL_Hx}{f(x)}\le\frac{1+\epsilon}{1-\epsilon},$$
	so $G$ $3\epsilon$-spectrally approximates $H$. By Lemma \ref{lem:specrigid}, $H$ and $G$ must share 
	$$ k:=\frac{dn}{2}(1-9\epsilon^2d/2)$$
	edges. Thus, $G$ can be encoded by specifying:
	\begin{enumerate}
	\item Which edges of $H$ occur in $G$. This is a subset of the edges of $H$, which requires at most $dn/2$ bits.
	\item The remaining $dn/2-k = 9\epsilon^2d^2n/4$ edges of $G$. Each edge requires at most $2\lg(n)$ bits to specify,
		so the number of bits needed is at most $18\epsilon^2d^2n\lg n/4$.
	\end{enumerate}
	Thus, the total number of bits required is at most
	$$ {dn/2+18\epsilon^2d^2n\lg n/4},$$
	as desired.
\end{proof}

\begin{theorem}{(\cite{wormald}{Cor. 2.4})}\label{thm:mckay} For $d=o(\sqrt{n})$, the number of $d-$regular graphs on $n$ vertices is:
	$$ \frac{(dn)!}{(dn/2)!2^{dn/2}(d!)^n}\exp\left(\frac{1-d^2}{4}-\frac{d^3}{12n}+O\left(\frac{d^2}{n}\right)\right)
	\\\ge \exp(dn\ln(n/d)/2\cdot(1-o_n(1))).$$
\end{theorem}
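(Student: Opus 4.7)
The plan is to establish this via the Bollob\'as configuration (pairing) model. To each of the $n$ vertices attach $d$ half-edges, giving $dn$ stubs in total; a uniformly random perfect matching of these stubs produces a random $d$-regular multigraph $\tilde G$. The number of such matchings is exactly $(dn)!/((dn/2)!\,2^{dn/2})$, and every simple $d$-regular graph arises from exactly $(d!)^n$ configurations (one permutation of stubs per vertex). Hence the number of simple $d$-regular graphs equals
\[
\frac{(dn)!}{(dn/2)!\,2^{dn/2}(d!)^n}\cdot \P[\tilde G\text{ is simple}],
\]
so the entire task reduces to an asymptotic estimate of $\P[\tilde G\text{ is simple}]$ matching the prescribed exponential factor.

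For the simplicity probability, let $L$ and $M$ count self-loops and repeated edges of $\tilde G$ respectively. Using the method of moments, in the regime $d=o(\sqrt{n})$ one shows that $(L,M)$ converges jointly to independent Poisson random variables with means $(d-1)/2$ and $(d-1)^2/4$. The computation of the $k$-th factorial moment reduces to counting ordered tuples of $k$ loops (or $k$ parallel edge pairs) and dividing by the number of complete matchings of the remaining stubs; ratios of falling factorials of size $dn$ then concentrate when $d=o(\sqrt{n})$. Exponentiating yields $\P[L=0,M=0]=\exp((1-d^2)/4+\text{correction})$, and the refined $-d^3/(12n)$ term arises by expanding the probability that a given pair of stubs is matched as $1/(dn-1)=(1+O(d/n))/(dn)$ rather than treating it as $(1+o(1))/(dn)$.

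For the second inequality, apply Stirling's formula $\ln m! = m\ln m - m + O(\log m)$ to each factorial. The log-factorials consolidate into
\[
\ln\frac{(dn)!}{(dn/2)!\,2^{dn/2}(d!)^n} = \frac{dn}{2}\ln(n/d)+\frac{dn}{2}+O(n\log d),
\]
and the exponential correction contributes $O(d^2)=o(n)$ in the regime $d=o(\sqrt{n})$. Since $d=o(\sqrt{n})$ forces $\ln(n/d)\ge \tfrac12\ln n\to\infty$, the additive terms $dn/2$, $O(n\log d)$, and $O(d^2)$ are all dominated by $\tfrac{dn}{2}\ln(n/d)$ and get absorbed into the $(1-o_n(1))$ factor, giving the claimed lower bound.

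The main obstacle is the middle step: pushing the Poisson convergence of $(L,M)$ through with enough precision to produce the $-d^3/(12n)$ correction rather than merely the leading constant $(1-d^2)/4$. This requires tracking first-order corrections in the stub-matching probabilities that would ordinarily be absorbed as $1+o(1)$, and uniform control of the moment estimates as $d$ grows with $n$ up to $o(\sqrt{n})$. Since this is a classical enumeration result of McKay and Wormald, we would simply cite their analysis (in particular the version recorded in \cite{wormald}) for the detailed moment computation, and verify only the Stirling step needed for the displayed lower bound.
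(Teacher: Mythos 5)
The paper does not prove this statement; it imports the asymptotic count directly from Wormald's survey (Cor.~2.4) and uses the displayed lower bound, which follows from it, without further comment. Your proposal does the same thing---you cite the reference for the precise exponential correction, which is the genuinely hard part---but you add two correct and useful pieces of scaffolding: a configuration-model/Poisson sketch explaining where the leading factor $(dn)!/\bigl((dn/2)!\,2^{dn/2}(d!)^n\bigr)$ and the $(1-d^2)/4$ exponent come from (indeed $-(d-1)/2 - (d-1)^2/4 = (1-d^2)/4$), and an explicit Stirling verification of the $\ge$ inequality. The Stirling step checks out: the log of the main factor is $(dn/2)\ln(n/d) + dn/2 + O(n\log d)$, and since $d=o(\sqrt{n})$ forces $\ln(n/d)\to\infty$ while the exponential correction contributes only $O(d^2)=o(n)$, the lower-order terms are all absorbed into the $(1-o_n(1))$ factor. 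So this is essentially the same approach as the paper---cite the enumeration theorem and extract the bound via Stirling---with the extraction actually written out.
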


\begin{proof}[Proof of Theorem \ref{thm:speclb}]  Let $N$ be the number of distinct sketches produced by
		$\sk$ and let $d=\lceil \frac{1}{25\eps^2}\rceil$. By Lemma
		\ref{lem:sketchcount}, the binary logarithm of the number of
		$d-$regular graphs $\epsilon$-spectrally approximated by any
		single sketch $f\in\mathrm{range}(\sk)$ is at most $${dn/2+dn\lg n/5}\le dn\lg n/5\cdot (1+o(1)).$$
		On the other hand by Theorem \ref{thm:mckay}, since $d=o(\sqrt{n})$ we have
		\begin{equation}\label{eqn:manygraphs}\lg|G_{n,d}|\ge dn\lg(n/d)/2\cdot(1-o(1))\ge dn\lg n/4\cdot (1-o(1)). \end{equation}
		Since every $d-$regular graph receives a sketch, we must have
		$$(1+o(1))\lg N \ge dn\lg n/4-dn\lg n/5 = dn\lg n/20=n\lg n/500\epsilon^2,$$
		as desired.
	
	\end{proof}
\begin{remark} The proof above actually shows that any $1/5\sqrt{d}$-spectral sketching scheme for $d-$regular graphs on $n$ 
	vertices must use at least $\Omega(dn\lg n)$ bits {\em on average}, since the same proof goes through if we only 
	insist that the sketches work for {\em most} graphs. \end{remark}

\begin{remark} The result of \cite{bss} produces $\epsilon$-spectral sparsifiers with $16n/\epsilon^2$ edges,
which yield $\epsilon$-spectral sketches with $64n\lg n/\epsilon^2$ bits by discretizing the edge weights up to $1/n^2$ error,
so the bound above is tight up to a factor of $64\cdot 500$. We have not made any attempt to optimize the constants. 
\end{remark}
\section{Lower Bound for Cut Sketches}
In this section we prove Theorem \ref{thm:cutlb}. The new ingredient is a rigidity lemma for cuts, which may be seen as a discrete analogue
of Lemma \ref{lem:specrigid}. The lemma holds for bipartite graphs and is proven using a Goemans-Williamson \cite{goemans} style rounding argument.
\begin{lemma}[Rigidity of Cut Approximation]\label{lem:cutrigid}
	Suppose $G$ and $H$ are simple $d-$regular bipartite graphs with the same bipartition $L\cup R$, such that
	\begin{equation}\label{eqn:cutappx} (1-\eps)L_G\preceq_\square L_H\preceq_\square (1+\eps)L_G.\end{equation}
		Then $G$ and $H$ must have at least $\frac{dn}{2}(1-3\sqrt{d}\epsilon)$ edges in common.
	\end{lemma}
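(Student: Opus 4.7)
The plan is to encode the symmetric difference via the bipartite difference matrix
$B := A_{H,LR} - A_{G,LR} \in \{-1,0,1\}^{L\times R}$,
whose number of nonzero entries equals $k := |E(G) \Delta E(H)|$. Writing $x = (y,z) \in \{-1,1\}^L \times \{-1,1\}^R$ and expanding using bipartiteness gives $x^T L_G x - x^T L_H x = 2\, y^T B z$; combined with the elementary bound $x^T L_G x \le 2dn$ and the cut-approximation hypothesis \eqref{eqn:cutappx}, this implies the cut-norm estimate
$$\max_{y\in\{-1,1\}^L,\ z\in\{-1,1\}^R} |y^T B z| \;\le\; \epsilon d n.$$
The task is then to show that $k$ cannot be large when this cut-norm of $B$ is small.

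To extract a lower bound on $\max |y^T B z|$ in terms of $k$, I will use the Goemans-Williamson style move of sampling $z \in \{-1,1\}^R$ uniformly at random and setting $y_i := \mathrm{sign}((Bz)_i)$, so that $y^T B z = \sum_{i \in L} |(Bz)_i|$. Row by row, Khintchine's inequality with its sharp constant yields $\E_z |(Bz)_i| \ge \sqrt{d_i}/\sqrt{2}$, where $d_i$ is the number of nonzero entries of $B$ in row $i$. Since both $A_G$ and $A_H$ are $d$-regular, each row of $B$ has at most $2d$ nonzeros, so the trivial concavity estimate $\sqrt{d_i} \ge d_i/\sqrt{2d}$ gives $\sum_i \sqrt{d_i} \ge k/\sqrt{2d}$. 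Taking expectations and comparing with the cut-norm upper bound above produces $k/(2\sqrt{d}) \le \epsilon d n$, i.e.\ $k \le 2\epsilon d^{3/2} n$; since $|E(G) \cap E(H)| = (dn - k)/2$, this delivers the stated bound (in fact with $2\sqrt{d}\epsilon$ in place of $3\sqrt{d}\epsilon$, leaving some slack).

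The conceptual crux, and the main difference from the spectral case, is that a cut approximation only controls a bilinear form on $\pm 1$ vectors, so the operator-norm to Frobenius-norm step used in Lemma \ref{lem:specrigid} is unavailable; the randomized rounding above is precisely what bridges that gap, converting cut-norm control of $B$ into a bound on its number of nonzero entries. The main technical issue I anticipate is book-keeping constants through the Khintchine step combined with the concavity loss $\sqrt{d_i} \ge d_i/\sqrt{2d}$; one could instead use Gaussian (genuine Goemans-Williamson) rounding, but this would introduce a Grothendieck-type constant without qualitative improvement, so the Rademacher rounding above is the cleanest route.
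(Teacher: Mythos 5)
Your argument is correct and arrives at a slightly stronger conclusion ($2\sqrt{d}\epsilon$ in place of $3\sqrt{d}\epsilon$), but it takes a genuinely different route from the paper's proof.

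The paper also proves the contrapositive via a Goemans-Williamson style rounding, but it does so by constructing an explicit SDP solution: it takes the columns $z_i$ of $I + M/\sqrt{d}$ (where $M = A_H - A_G$), normalizes them to unit vectors $y_i$, shows via the bipartite no-common-neighbor property that the SDP objective $\sum_{ij} M_{ij}\langle y_i,y_j\rangle$ exceeds $\pi\epsilon dn$, and then applies random-hyperplane rounding with the $\frac{2}{\pi}\arcsin$ identity to produce a cut vector with $x^T M x > 2\epsilon dn$. Your argument instead isolates the $L\times R$ block $B$ of $A_H - A_G$ (using bipartiteness to write the quadratic form as $2\,y^T B z$), rounds by sampling $z\in\{-1,1\}^R$ uniformly and setting $y = \mathrm{sign}(Bz)$, and bounds $\E\sum_i|(Bz)_i|$ from below row-by-row via Khintchine plus the concavity estimate $\sqrt{d_i}\ge d_i/\sqrt{2d}$ (valid since $d$-regularity forces $d_i\le 2d$). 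This sidesteps the explicit vector construction and the $\arcsin$ calculation entirely, trading them for a one-dimensional Khintchine bound; it is arguably cleaner, and it naturally factors through a general statement relating the cut norm of a $\{-1,0,1\}$ matrix with bounded row support to its number of nonzeros. One minor remark: you do not actually need the sharp Szarek constant $1/\sqrt{2}$ in Khintchine to recover the stated $3\sqrt{d}\epsilon$ bound --- any lower-bound constant above $\sqrt{2}/3$ would do --- but using it is what buys you the improved $2\sqrt{d}\epsilon$.
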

	\begin{proof}
We will show the contrapositive. Assume $$|E(G)\setminus
		E(H)|=|E(H)\setminus E(G)|=\delta dn/2$$ for some $\delta\ge 3\sqrt{d}\epsilon$.  To show that
\eqref{eqn:cutappx} does not hold, it is sufficient to exhibit a vector $x\in\{-1,1\}^n$ 
such that 
\[ x^TMx:=x^T(L_G - L_H)x > 2\epsilon dn \ge \epsilon \cdot x^T L_G x,  \]
where $M=L_G-L_H=A_H-A_G$, since both graphs are $d-$regular and the latter inequality follows from $\|L_G\|\le 2d$.
To find such an $x$ we will first construct $n$ vectors $y_1,\ldots,y_n\in\R^n$ such that
		\begin{equation}\label{eqn:yiyj}
\sum_{i,j = 1}^n M_{ij} \langle y_i, y_j \rangle > \pi \epsilon dn
		\end{equation}
and then use hyperplane rounding to find scalars $x_1,\ldots,x_n\in\{-1,1\}$ which satisfy:
		\begin{equation}\label{eqn:xixj} \sum_{i,j =1}^n M_{ij} x_ix_j>2\epsilon dn.\end{equation}
Let $z_1, \ldots, z_n\in\R^n$ be the columns of
\[
I + \frac{M}{\sqrt{d}},
\]
and note that $\|z_i\|^2\in [1,3]$ since $z_i(i)^2=1$ and every vertex is incident with at most $d$ edges in each of  $E(H)\setminus E(G)$ and $E(G)\setminus E(H)$. The relevant inner products of the $z_i$ are easy to understand:
			$$M_{ij}\langle z_i,z_j\rangle = \begin{cases} 0 & \textrm{if $M_{ij}=0$} \\
			\frac{2}{\sqrt{d}}=1\cdot(1\cdot \frac1{\sqrt{d}}+1\cdot \frac{1}{\sqrt{d}}) & \textrm{if }ij\in E(H)\setminus E(G)\\ \frac{2}{\sqrt{d}}=(-1)\cdot(1\cdot \frac{-1}{\sqrt{d}}+1\cdot \frac{-1}{\sqrt{d}}) & \textrm{if } ij\in E(G)\setminus E(H),
		\end{cases}$$
		where in the latter two cases we have used the fact that $i$ and $j$ cannot have any common neighbors because they lie on
		different sides of the bipartition.
Letting $y_i=z_i/\|z_i\|$, we therefore have
		\begin{equation}\label{eqn:2eps}
	\sum_{i,j = 1}^n M_{ij} \langle y_i, y_j \rangle \geq 2\cdot \delta dn\cdot \frac{2}{3\sqrt{d}}>\pi\epsilon dn
		\end{equation}
by our choice of $\delta$.
Let $w$ be a random unit vector and let 
		$$x_i = \begin{cases} +1 &\textrm{if }\langle y_i,w\rangle \ge 0\\
		-1 &\textrm{if }\langle y_i,w\rangle < 0.\end{cases}$$
We denote the angle between vectors $y_i, y_j \in Y$ as $\theta_{ij}$. We recall that $\theta_{ij} = \cos^{-1} \langle y_i, y_j \rangle$ since $y_i$ and $y_j$ have unit length.
It follows that the probability that $x_i \not = x_j$ is equal to $\frac{\theta_{ij}}{\pi}$ and the probability of $x_i = x_j$ is equal to $(1 - \frac{\theta_{ij}}{\pi})$.
Thus,
\begin{align*}
\E\left[\sum_{i,j = 1}^{n} M_{ij} x_i x_j\right] &= \sum_{i,j = 1}^{n} M_{ij} \E[x_i x_j] \\
&=  \sum_{i,j = 1}^{n}  M_{ij} \left(\frac{\theta_{ij}}{\pi} (-1) + (1- \frac{\theta_{ij}}{\pi})(1)\right) \\
&= \sum_{i,j = 1}^{n} M_{ij} \left(1- \frac{2\theta_{ij}}{\pi}\right) \\
&= \frac{2}{\pi}\sum_{i,j = 1}^{n} M_{ij} \left(\frac{\pi}{2} -  \theta_{ij}\right) \\
&= \frac{2}{\pi}\sum_{i,j = 1}^{n} M_{ij} \left(\frac{\pi}{2} -  \cos^{-1} \langle y_i, y_j \rangle\right) \\
&= \frac{2}{\pi}\sum_{i,j = 1}^{n} M_{ij} \sin^{-1} \langle y_i, y_j \rangle\\
	&\ge \frac{2}{\pi}\sum_{i,j = 1}^{n} M_{ij} \langle y_i, y_j \rangle\quad\textrm{since $\sin^{-1}(x)\ge x$}\\
	&>2\epsilon dn,
\end{align*}
		by \eqref{eqn:2eps}, as desired.

\end{proof}
\begin{remark} The analysis of the rounding scheme above can be improved from
$2/\pi$ to the Goemans-Williamson constant $0.868\ldots$, but we have chosen
not to do so for simplicity. 
\end{remark}

\begin{theorem}\label{thm:cutlb} For any $\epsilon=\omega(n^{-1/4})$, any $\epsilon$-cut sketching scheme $\sk$ for graphs with $n$ vertices
	must use at least $\frac{n\lg n}{2304\eps^2}\cdot (1-o_n(1))$ bits in the worst case.
\end{theorem}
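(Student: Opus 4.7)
The plan is to follow the proof of Theorem \ref{thm:speclb} step by step, replacing Lemma \ref{lem:specrigid} by Lemma \ref{lem:cutrigid} and restricting attention to the class $G_{n,d}^{\mathrm{bip}}$ of $d$-regular bipartite graphs on a fixed balanced bipartition with $|L|=|R|=n/2$ (since that is the class for which cut rigidity is available). Two quantitative differences will drive the bookkeeping: the cut rigidity bound controls the symmetric difference by a fraction $3\sqrt{d}\epsilon$ rather than the spectral $\epsilon^2 d/2$, which is weaker by a factor of roughly $\sqrt{d}\epsilon$, and we will need an enumeration result for balanced bipartite regular graphs in place of Theorem \ref{thm:mckay}.

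First I would prove a bipartite analogue of Lemma \ref{lem:sketchcount}. Given $f:\{-1,1\}^n\to\R$, let $H\in G_{n,d}^{\mathrm{bip}}$ be the lexicographically first graph $\epsilon$-cut-approximated by $f$; for any other such $G$, dividing the two inequalities \eqref{eqn:sketchapprox} at $H$ and $G$ gives $\frac{1-\epsilon}{1+\epsilon}\le \frac{x^TL_Hx}{x^TL_Gx}\le\frac{1+\epsilon}{1-\epsilon}$ for all $x\in\{-1,1\}^n$, so $G$ and $H$ are $3\epsilon$-cut approximations of one another. Applying Lemma \ref{lem:cutrigid} with $3\epsilon$ in place of $\epsilon$ yields $|E(G)\cap E(H)|\ge \frac{dn}{2}(1-9\sqrt{d}\epsilon)$. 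Then $G$ can be encoded by (i) which edges of $H$ it retains (at most $dn/2$ bits) and (ii) its remaining $O(\sqrt{d}\epsilon\cdot dn)$ edges (at $2\lg n$ bits each), giving
$$\lg|\{G\in G_{n,d}^{\mathrm{bip}}:\, G\text{ is }\epsilon\text{-cut approximated by }f\}|\le \frac{dn}{2}+O(\sqrt{d}\epsilon\cdot dn\lg n).$$

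Next I would invoke a Canfield--Greenhill-type asymptotic for $0/1$ matrices with all row and column sums equal to $d$, which for $d=o(\sqrt{n})$ yields $\lg|G_{n,d}^{\mathrm{bip}}|\ge \frac{dn}{2}\lg(n/d)\cdot(1-o(1))$, directly analogous to Theorem \ref{thm:mckay}. Setting $d=\lceil c/\epsilon^2\rceil$ for a sufficiently small constant $c>0$ ensures $\sqrt{d}\epsilon=O(\sqrt{c})$, and under the hypothesis $\epsilon=\omega(n^{-1/4})$ we have $d=o(\sqrt{n})$. A counting argument identical to the proof of Theorem \ref{thm:speclb} then gives
$$(1+o(1))\lg N\ge \frac{dn}{2}\lg n-\frac{dn}{2}-O(\sqrt{c})\cdot dn\lg n=\Omega(dn\lg n)=\Omega(n\lg n/\epsilon^2).$$

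The main obstacle is purely quantitative: because the cut rigidity fraction scales as $\sqrt{d}\epsilon$ rather than $d\epsilon^2$, the degree $d$ must be taken to be a smaller constant multiple of $1/\epsilon^2$ than in the spectral proof, so the leading constant necessarily degrades. Balancing the two error terms above against the $\frac{dn}{2}\lg n$ main term from the bipartite enumeration (and not invoking the sharper Goemans--Williamson constant mentioned in the remark following Lemma \ref{lem:cutrigid}) yields the claimed bound $\frac{n\lg n}{2304\epsilon^2}\cdot(1-o_n(1))$.
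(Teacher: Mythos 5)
Your proposal is correct and follows essentially the same route as the paper: restrict to $d$-regular bipartite graphs, apply Lemma~\ref{lem:cutrigid} at error $3\epsilon$ to the two graphs sharing a sketch, encode $G$ relative to the lexicographically first such graph $H$ using a subset of $E(H)$ plus the $O(\sqrt{d}\epsilon\cdot dn)$ extra edges, and compare against a lower bound on the number of $d$-regular bipartite graphs with $d=\Theta(1/\epsilon^2)=o(\sqrt{n})$. The one place you diverge is the enumeration step: you propose importing a Canfield--Greenhill-type asymptotic for $0/1$ matrices with constant row and column sums, whereas the paper avoids any new citation by noting that the bipartite double cover map $G_{n/2,d}\hookrightarrow B_{n,d}$ is injective, so $\lg|B_{n,d}|\ge\lg|G_{n/2,d}|$ and the already-stated Theorem~\ref{thm:mckay} (instantiated at $n/2$ vertices, via \eqref{eqn:manygraphs}) suffices. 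Both are valid; your direct count would actually give a slightly stronger bound (roughly $\tfrac{dn}{2}\lg n$ rather than the paper's $\tfrac{dn}{8}\lg n$) at the cost of an additional reference, while the double-cover trick is self-contained. You also stop short of actually pinning down the constant $2304$: be aware that this requires committing to a specific $d$ (the paper takes $d=\tfrac{1}{144\epsilon^2}$) and carrying the arithmetic through, rather than leaving $c$ unspecified.
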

\begin{proof} Assume $n$ 
	is divisible by $4$ (add a constant if this is not the case). Let $d=\frac{1}{16\cdot 9\epsilon^2}$ and let $B_{n,d}$ be the set of bipartite graphs on $n$ vertices with respect to a fixed bipartition. We proceed as in the proof of Theorem \ref{thm:speclb}. Let $f:\{-1,1\}^n\rightarrow \R$ be any function which is
	an $\epsilon-$cut sketch for some graph $H$. Arguing as in Lemma \ref{lem:sketchcount}, any other graph $G\in B_{n,d}$ which
	has the same sketch must $3\eps$-cut approximate $H$, so by Lemma \ref{lem:cutrigid}, any such $G$ must have at most
	$9\sqrt{d}\epsilon\cdot dn/2$ edges which are not present in $H$. Thus, the encoding length of such $H$ is at most
	$$ \ell:=dn/2 + 9d^{3/2}n\epsilon \lg n=dn\lg n/16\cdot(1+o(1))$$
	bits, by our choice of $d$, so any particular $f$ can only be an $\epsilon$-cut sketch for $2^\ell$ graphs in $B_{n,d}$

	On the other hand, $B_{n,d}$ is quite large.  Recall that the {\em
	bipartite double cover} of a graph $F$ on $n$ vertices is the graph on
	$2n$ vertices obtained by taking its tensor product with $K_2$, and two distinct
	graphs must have distinct double covers.  Thus, by \eqref{eqn:manygraphs} we have
	$$\lg |B_{n,d}|\ge \lg |G_{n/2,d}|\ge dn\lg n/8\cdot (1-o(1)).$$

	Thus, if $N$ is the number of distinct sketches produced by $\sk$, we must have
	$$(1+o(1))\lg N \ge dn\lg n/8 - dn\lg n/16 = dn\lg n/16 = n\lg n/(16)^2\cdot 9\epsilon^2,$$
	 as desired.
\end{proof}
\bibliography{main.bbl}
\end{document}